\documentclass[12pt,reqno]{amsart}
\usepackage{amsmath,amsfonts,amsthm,amssymb}
\newcommand\version{July 26, 2023}
\usepackage{amsxtra}

\setlength{\voffset}{-.7truein}
\setlength{\textheight}{8.8truein}
\setlength{\textwidth}{6.05truein}
\setlength{\hoffset}{-.7truein}




\newtheorem{theorem}{Theorem}

\newtheorem{corollary}[theorem]{Corollary}

\theoremstyle{definition}

\theoremstyle{remark}




\renewcommand{\epsilon}{\varepsilon}

\renewcommand{\phi}{\varphi}
\newcommand{\R}{\mathbb{R}}

\DeclareMathOperator{\Tr}{Tr}


\begin{document}

\title[Simple approach to LT type inequalities  --- RS \& JPS,  \version]{A simple approach to Lieb--Thirring type inequalities}

\author{Robert Seiringer}
\address{Robert Seiringer, IST Austria, Am Campus 1, 3400 Klosterneuburg, Austria}
\email{rseiring@ist.ac.at}

\author{Jan Philip Solovej}
\address{Jan Philip Solovej, Department of Mathematics, University of Copenhagen, Universitetsparken 5, DK-2100 Copenhagen \O, Denmark}
\email{solovej@math.ku.dk}

\thanks{\copyright\, 2023 by the authors. This paper may be  
reproduced, in
its entirety, for non-commercial purposes.}

\begin{abstract} 
In \cite{Nam} Nam proved a Lieb--Thirring Inequality for the kinetic energy of a fermionic quantum system, with almost optimal (semi-classical) constant and a gradient correction term. We present a stronger version of this inequality, with a much simplified proof. As a corollary we obtain a simple proof of the original Lieb--Thirring inequality.
\end{abstract}

\date{\version}

\maketitle


Let $\gamma$ be a positive trace-class operator on $L^2(\R^d)$ with density (i.e., diagonal) $\rho$. Such operators naturally arise as reduced density matrices of many-particle quantum systems. In the case of fermions, the Pauli principle dictates a bound on the eigenvalues of $\gamma$, which in the simplest (spinless) case reads $\gamma\leq 1$. In this case, Lieb and Thirring \cite{LT1,LT2} proved a powerful lower bound on the kinetic energy $\Tr (-\Delta)\gamma$, where $\Delta$ is the Laplacian on $\R^d$, and the trace should really be interpreted as the one of the positive operator $-\nabla \gamma \nabla$. This bound is one of the key ingredients in their elegant proof of the stability of matter, first proved by Dyson and Lenard in \cite{DL}. It can be interpreted as a many-body uncertainly principle, and reads 
\begin{equation}\label{LTo}
\Tr (-\Delta)\gamma \geq C_d^{\rm LT} \int_{\R^d} \rho^{1+2/d}
\end{equation}
for some universal constant $C_d^{\rm LT}$ depending only on the space dimension $d$. The optimal value of this constant is not known, and for $d\geq 3$ was conjectured by Lieb and Thirring to equal the semi-classical Thomas--Fermi value, $C_d^{\rm TF} = 4\pi \frac d{d+2}  \Gamma(1+d/2)^{2/d}$. We refer to \cite{Fd} for the currently best known lower bounds, as well as to \cite{LTbook} for further information on Lieb--Thirring and related inequalities. We note that Lieb and Thirring proved \eqref{LTo} by first proving a dual inequality on the sum of the negative eigenvalues of Schr\"odinger operators, but direct proofs of \eqref{LTo} have since also been derived  \cite{Rumin,Lundholm,Fd}.

In \cite{Nam} Nam proved a Lieb--Thirring inequality with constant arbitrarily close to $C_d^{\rm TF}$, at the expense of a gradient correction term. In this paper we present an improved version of Nam's inequality, with a much simpler proof. Our proof is inspired by \cite[Thm.~3]{LLS}, where an analogous upper bound is proved (on the kinetic energy density functional, i.e., the infimum of $\Tr(-\Delta)\gamma$ for given $\rho$). 
Interestingly, the method can also be used for a lower bound, in a similar spirit as the method of coherent states, which can also be applied to give bounds in both directions \cite{LLa}, but seems to be more useful for the study of the dual problem, however.

Our main result is the following.

\begin{theorem}\label{thm:main}
Let $\eta:\R_+\to \R$ be a function with 
\begin{equation}\label{norm}
\int_0^\infty \eta(t)^2 \frac {dt}t = 1 = \int_0^\infty \eta(t)^2 t\, {dt}  
\end{equation}
and let $C_d^{\rm TF} = 4\pi \frac d{d+2}  \Gamma(1+d/2)^{2/d}$. 
For any trace-class $0\leq \gamma\leq 1$ on $L^2(\R^d)$ with density $\rho$, 
\begin{equation}\label{main:eq}
\Tr (-\Delta) \gamma \geq \frac{C_d^{\rm TF}}{\left(  \int_0^\infty \eta(t)^2 t^{d+1} dt\right)^{2/d}}  \int_{\R^d} \rho^{1+2/d} 
- \frac 4{d^2} \int_{\R^d}  |\nabla  \sqrt\rho|^2 \int_0^\infty\eta'(t)^2 t \,dt
\end{equation}
\end{theorem}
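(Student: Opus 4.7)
My plan is to prove Theorem \ref{thm:main} via a Calder\'on-type coherent state representation of $-\Delta$. I would introduce a radial function $\psi \in L^2(\R^d)$ whose Fourier transform is (up to a normalization) $\eta(|\cdot|)$, and the two-parameter family of wavepackets $\psi_{x,t}(y) = t^{-d/2}\psi((y-x)/t)$ indexed by position $x \in \R^d$ and scale $t > 0$. A direct Fourier-side computation, using the first normalization $\int_0^\infty \eta(t)^2\,dt/t = 1$ (after the substitution $s = t|p|$), yields the operator identity
\begin{equation*}
\int_{\R^d}\int_0^\infty \frac{dx\,dt}{t^{d+1}}\,|\nabla\psi_{x,t}\rangle\langle\nabla\psi_{x,t}| = -\Delta,
\end{equation*}
and hence the exact representation
\begin{equation*}
\Tr (-\Delta)\gamma = \int_{\R^d}\int_0^\infty \frac{dx\,dt}{t^{d+1}}\,\|\gamma^{1/2}\nabla\psi_{x,t}\|^2.
\end{equation*}

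Using the elementary identity $\nabla_y \psi_{x,t}(y) = -\nabla_x\psi_{x,t}(y)$, I would rewrite the integrand as $\|\nabla_x(\gamma^{1/2}\psi_{x,t})\|^2$, with the norm in $y$ and the gradient taken in $x$. A Hoffmann--Ostenhof-style Cauchy--Schwarz bound then gives
\begin{equation*}
\|\nabla_x(\gamma^{1/2}\psi_{x,t})\|_{L^2(dy)}^2 \geq |\nabla_x \sqrt{\tilde\rho_t(x)}|^2,
\end{equation*}
where $\tilde\rho_t(x) := \langle\psi_{x,t},\gamma\psi_{x,t}\rangle$ is the density of $\gamma$ probed at scale $t$, which integrated against $dx\,dt/t^{d+1}$ reproduces $\int\rho$. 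The Pauli constraint $\gamma \leq 1$ gives the crucial pointwise upper bound $\tilde\rho_t(x) \leq \|\psi\|^2$. The desired lower bound should then follow from a scale-by-scale analysis: for each $x$, the weighted $t$-integral of $|\nabla_x\sqrt{\tilde\rho_t}|^2$ subject to this sup-constraint should dominate $C_d^{\rm TF}\rho(x)^{1+2/d}$ up to a gradient correction controlled by $|\nabla\sqrt\rho|^2$.

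The main obstacle will be carrying out this scale integration cleanly while retaining the sharp constants. The reduction from the smoothed $\tilde\rho_t$ back to $\rho$ is precisely where the $\eta'$ correction enters, and getting the optimal prefactor $(\int_0^\infty\eta^2 t^{d+1}dt)^{-2/d}$ requires a careful identity rather than a crude estimate; I expect the second normalization $\int_0^\infty \eta(t)^2 t\,dt = 1$ is the mechanism that calibrates the $t$-scaling so that the TF constant $C_d^{\rm TF}$ emerges with no loss. By comparison, the resolution-of-identity and the diamagnetic step are essentially formal manipulations.
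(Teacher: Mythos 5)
There is a genuine gap, and it sits exactly where you defer to a ``scale-by-scale analysis'': that step is not a technical cleanup but is impossible as set up. After your Cauchy--Schwarz (Hoffmann--Ostenhof) step, the lower bound you retain is $\int_{\R^d}\int_0^\infty |\nabla_x\sqrt{\tilde\rho_t(x)}|^2\, \frac{dx\,dt}{t^{d+1}}$, which depends only on the spatial variation of the smoothed densities $\tilde\rho_t$. Test it on (an approximation of) the free Fermi sea, i.e., $\gamma$ the spectral projection of $-\Delta$ below $\mu$ cut to a large box: in the bulk $\tilde\rho_t(x)$ is essentially constant in $x$ for every $t$, so your lower bound is essentially zero, while $\Tr(-\Delta)\gamma\approx C_d^{\rm TF}\int\rho^{1+2/d}$ is of the order of the volume. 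The Pauli constraint $\tilde\rho_t\leq\|\psi\|^2$ is an upper bound and cannot manufacture a lower bound on an $x$-gradient; the Thomas--Fermi term lives in the oscillation (momentum) content of $\gamma$, which your wavepackets $\psi_{x,t}$ --- labeled only by position and scale --- discard the moment you pass to $|\nabla_x\sqrt{\tilde\rho_t}|^2$. (Your first two steps, the Calder\'on resolution of $-\Delta$ and the identity $\nabla_y\psi_{x,t}=-\nabla_x\psi_{x,t}$, are fine; the loss is entirely in the Cauchy--Schwarz step.)

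The paper's proof keeps an explicit momentum label. It uses coherent states $\psi_{p,t}(x)=e^{ipx}\eta(t/f(x))$ with $p\in\R^d$ and $t>0$, where $f$ is later chosen as $c\rho^{1/d}$ so that $t$ plays the role of a local Fermi momentum. An IMS-type identity (using your first normalization) writes $\Tr(-\Delta)\gamma$ as $\int\int p^2\langle\psi_{p,t}|\gamma|\psi_{p,t}\rangle\,\frac{dt}{t}\,dp$ minus a localization error $\int_{\R^d}\rho\,|\nabla f|^2 f^{-2}\int_0^\infty\eta'(t)^2 t\,dt$ --- this IMS error, not a Hoffmann--Ostenhof inequality, is the source of the gradient correction. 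The second normalization is then used to subtract $t^2$ from $p^2$ at the exact cost $\int\rho f^2$, and the constraint $0\leq\gamma\leq 1$ enters as a bathtub argument: $\langle\psi_{p,t}|\gamma|\psi_{p,t}\rangle$ is replaced by $\|\psi_{p,t}\|^2$ where $p^2<t^2$ and by $0$ where $p^2>t^2$, which produces the semiclassical constant after optimizing over $c$. If you want to salvage a position--scale decomposition, you would need to exploit $\gamma\leq 1$ directly on $\|\gamma^{1/2}\nabla\psi_{x,t}\|^2$ in the spirit of Rumin, rather than passing to $\sqrt{\tilde\rho_t}$.
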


We note that under the normalization conditions \eqref{norm} we have $\int_0^\infty \eta(t)^2 t^{d+1} dt > 1$ by Jensen's inequality. In order for this integral to be close to $1$, $\eta^2$ needs to be close to a $\delta$-distribution at $1$, in which case the final factor in \eqref{main:eq} necessarily becomes large, however. 
A possible concrete choice is 
\begin{equation}\label{trial}
\eta(t) = (\pi \epsilon)^{-1/4}\exp\left( -  (\epsilon/2+ \ln t)^2/(2\epsilon)\right)
\end{equation}
for $\epsilon>0$. 
Then $\int_0^\infty \eta'(t)^2 t\,{dt} = (2\epsilon)^{-1}$ 
and
\begin{align*}
&\int_0^\infty \eta(t)^2 t^{1+x}  {dt}
=\exp\left( \epsilon  x(2+x)/4 \right)
\end{align*}
for any $x\in \R$. 
For this choice of $\eta$ the bound \eqref{main:eq} thus reads
$$
\Tr (-\Delta) \gamma \geq C_d^{\rm TF}  e^{-\epsilon(1 + d/2)  }  \int_{\R^d} \rho^{1+2/d} - \frac 2{d^2 \epsilon} \int_{\R^d}  |\nabla  \sqrt\rho|^2 
$$
for any $\epsilon>0$. A similar bound was proved by Nam in \cite{Nam}, but with the exponent $-1$ of $\epsilon$ in the gradient term replaced by $-3 - 4/d$. We don't expect the exponent $-1$ to be optimal, however. In fact, according to the Lieb--Thirring conjecture no correction term to the semiclassical expression should be needed at all for $d\geq 3$. Some correction term is needed for $d\leq 2$, but possibly the divergence of the prefactor as $\epsilon\to 0$ could be slower than in our bound.

As already pointed out in \cite{Nam}, one can combine an inequality of the form \eqref{main:eq} with the Hoffmann-Ostenhof inequality \cite{H2O}
\begin{equation}\label{h2o}
\Tr (-\Delta) \gamma \geq  \int_{\R^d}  |\nabla  \sqrt\rho|^2 
\end{equation}
to obtain a Lieb--Thirring inequality without gradient correction. The following is an immediate consequence of \eqref{main:eq} and \eqref{h2o}.

\begin{corollary}
For any trace-class $0\leq \gamma\leq 1$ on $L^2(\R^d)$ with density $\rho$, we have
\begin{equation}\label{LT}
\Tr (-\Delta) \gamma \geq   C_d^{\rm TF}  R_d \int_{\R^d} \rho^{1+2/d} 
\end{equation}
with 
\begin{equation}\label{def:Rd}
R_d =  \sup_{\eta}  \frac{1}{\left(  \int \eta(t)^2 t^{d+1} dt\right)^{2/d}} \frac 1 { 1+ \frac 4{d^2} \int \eta'(t)^2 t \,dt}
\end{equation}
where the supremum is over functions $\eta$ satisfying the normalization conditions \eqref{norm}. 
\end{corollary}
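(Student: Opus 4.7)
The plan is to eliminate the gradient correction term in \eqref{main:eq} by taking a convex combination with the Hoffmann-Ostenhof bound \eqref{h2o}. Write $T = \Tr(-\Delta)\gamma$, $P = \int_{\R^d} \rho^{1+2/d}$, $G = \int_{\R^d} |\nabla\sqrt\rho|^2$, and, for a fixed admissible $\eta$, set
\[
a(\eta) = \frac{C_d^{\rm TF}}{\left(\int_0^\infty \eta(t)^2 t^{d+1}\,dt\right)^{2/d}},\qquad b(\eta)=\frac{4}{d^2}\int_0^\infty \eta'(t)^2 t\,dt.
\]
Then Theorem~\ref{thm:main} says $T \geq a(\eta)P - b(\eta)G$ while \eqref{h2o} says $T \geq G$.

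For any $\lambda\in[0,1]$, taking the convex combination gives
\[
T \;\geq\; \lambda\bigl(a(\eta)P - b(\eta)G\bigr) + (1-\lambda)G \;=\; \lambda a(\eta)P + \bigl((1-\lambda) - \lambda b(\eta)\bigr)G.
\]
Choosing $\lambda = 1/(1+b(\eta))$ makes the coefficient of $G$ vanish, yielding
\[
T \;\geq\; \frac{a(\eta)}{1+b(\eta)}\,P \;=\; \frac{C_d^{\rm TF}}{\left(\int \eta(t)^2 t^{d+1}\,dt\right)^{2/d}}\cdot\frac{1}{1+\tfrac{4}{d^2}\int \eta'(t)^2 t\,dt}\,\int_{\R^d}\rho^{1+2/d}.
\]
Since $\eta$ was an arbitrary function obeying \eqref{norm} and the left-hand side does not depend on $\eta$, I would take the supremum over all such $\eta$ on the right. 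This produces exactly the factor $C_d^{\rm TF} R_d$ with $R_d$ as in \eqref{def:Rd}, giving \eqref{LT}.

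There is no real obstacle here: both input inequalities hold unconditionally for the same $\gamma$, and the convex-combination step is an elementary one-variable optimization. The only item worth mentioning is that one should check $G$ is finite (or handle $G=\infty$ separately), but if $G=+\infty$ then \eqref{h2o} already forces $T=+\infty$ and the conclusion is trivial, so without loss of generality $G<\infty$ and the computation above applies verbatim.
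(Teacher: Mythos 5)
Your proof is correct and is precisely the argument the paper has in mind when it calls the corollary an ``immediate consequence'' of \eqref{main:eq} and \eqref{h2o}: the convex combination with $\lambda=1/(1+b(\eta))$ cancels the gradient term, and the supremum over $\eta$ yields $R_d$. Nothing further is needed.
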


We shall show below that for $d\leq 2$, $R_d$ can be calculated explicitly. In fact, $R_1 = (-3/a)^{3}/2^4 \approx 0.132$, where $a \approx -2.338$ is the largest real zero of the Airy function, and $R_2 = 1/4$. We were not able to compute $R_d$ for $d\geq 3$, but it can easily be obtained numerically. For $d=3$, we find $R_d \approx 0.331$. In all these cases, our result is weaker than the best known one in \cite{Fd}, however, and also weaker than the one obtained in  \cite{Rumin} where \eqref{LT} was proved with $R_d = d/(d+4)$. 

\begin{proof}[Proof of Theorem~\ref{thm:main}]
The starting point is the following IMS type formula for any positive function $f:\R^d\to \R_+$, 
$$
\Delta = \int_0^\infty \eta(t/f(x)) \Delta \eta(t/f(x)) \frac {dt} t  + \frac{|\nabla f(x)|^2}{f(x)^2} \int_0^\infty \eta'(t)^2 t \,dt
$$
where we used the first normalization condition in \eqref{norm}. 
This follows from
$$\frac12\theta^2\Delta+\frac12\Delta \theta^2=\theta\Delta\theta +(\nabla\theta)^2$$
applied to $\theta(x)=\eta(t/f(x))$.
As a consequence, we have 
$$
\Tr (-\Delta) \gamma = -\int_{\R^d} \rho \frac{|\nabla f|^2}{f^2} \int_0^\infty \eta'(t)^2 t \,dt + \int_{\R^d} \int_0^\infty  p^2  \langle  \psi_{p,t}  | \gamma | \psi_{p,t} \rangle  \frac{dt}{t} dp
$$
where  $\psi_{p,t}(x) = (2\pi)^{-d/2} e^{ipx} \eta(t/f(x))$. Note also that 
$$
\int_{\R^d} \int_0^\infty  t  \langle  \psi_{p,t} | \gamma | \psi_{p,t} \rangle  {dt}\, dp = \int_{\R^d} \rho f^2 \int_0^\infty \eta(t)^2 t\, dt = \int_{\R^d} \rho f^2
$$
where we used the second normalization condition in \eqref{norm}. 
Hence
\begin{align*}
\Tr (-\Delta) \gamma &= -\int_{\R^d} \rho \frac{|\nabla f|^2}{f^2} \int_0^\infty \eta'(t)^2 t \,dt + \int \rho f^2 \\ & \quad + \int_{\R^d} \int_0^\infty  (p^2 - t^2)  \langle  \psi_{p,t} | \gamma | \psi_{p,t} \rangle  \frac{dt}{t} dp
\end{align*}
Since $0\leq \gamma\leq 1$ by assumption, we can get a lower bound on the last term as
$$
 \int_{\R^d} \int_0^\infty  (p^2 - t^2)  \langle \psi_{p,t} | \gamma | \psi_{p,t} \rangle  \frac{dt}{t} dp \geq  \int_{\R^d} \int_0^\infty  (p^2 - t^2)_-  \|  \psi_{p,t} \|^2  \frac{dt}{t} dp
$$
where $(\, \cdot\,)_- = \min\{ 0, \, \cdot\,\}$ denotes the negative part. 
Since
$$
 \|  \psi_{p,t} \|^2 =  \frac{1}{(2\pi)^{d}} \int_{\R^d} \eta(t/f(x))^2 dx
$$
we have
$$
\int_{\R^d} \int_0^\infty  (p^2 - t^2)_-  \|  \psi_{p,t} \|^2  \frac{dt}{t} dp = - \frac 1{ (2\pi)^{d}} \int_{|p|\leq 1} (1- p^2) dp \int_{\R^d} f^{d+2}  \int_0^\infty \eta(t)^2 t^{d+1} dt
$$
Altogether, we have thus shown that
\begin{align*}
\Tr (-\Delta) \gamma &\geq-\int_{\R^d} \rho \frac{|\nabla f|^2}{f^2} \int_0^\infty \eta'(t)^2 t \,dt + \int_{\R^d} \rho f^2 
\\ & \quad - \frac 1{ (2\pi)^{d}} \int_{|p|\leq 1} (1- p^2) dp \int_{\R^d} f^{d+2}  \int_0^\infty \eta(t)^2 t^{d+1} dt
\end{align*}
We now choose $f = c \rho^{1/d}$ and optimize over $c>0$. This gives \eqref{main:eq}.
\end{proof}

Finally, we shall analyze the optimization problem in \eqref{def:Rd}.  
Let $e_d>0$ denote  the ground state energy of $-\partial_t^2 - t^{-1} \partial_t + d^2 / (4 t^2)  +  t^d$ on $L^2(\R_+, t\, dt)$ (or, equivalently, of $-\Delta + |x|^d$ on $L^2(\R^{d+2})$). We claim that
\begin{equation}\label{claim}
R_d = \frac d 2 \left( \frac{ d+2}{2 e_d} \right)^{1+2/d}  
\end{equation}
To see this, let us note that by a straightforward scaling argument we can rewrite $R_d^{-1}$ as 
\begin{align}\nonumber
\frac 1{R_d}  &= \frac 4{d^2}  \inf_{\|\eta\|_2=1}  \left(  \int \eta(t)^2 t^{d+1} dt\right)^{2/d} \int \left(  \frac {d^2}{4t^2} \eta(t)^2  + \eta'(t)^2 \right)t \,dt
\\ & =  \frac 4{d^2}  \inf_{\|\eta\|_2=1} \inf_{\lambda>0}  \left( \frac 2{d \lambda} \right)^{2/d}  \left[ \frac d{d+2} \int \left(  \frac {d^2}{4t^2} \eta(t)^2 + \lambda t^d \eta(t)^2 + \eta'(t)^2 \right)t \,dt\right]^{1+2/d}
\end{align}
where $\|\eta\|_2$ denotes the $L^2(\R_+,t\, dt)$ norm, and we used the simple identity $a b^{x} = \frac{x^x}{(1+x)^{1+x}} \inf_{\lambda>0} \lambda^{-x} (a+\lambda b)^{1+x}$ for positive numbers $a$, $b$ and $x$. Taking first the infimum over $\eta$ for fixed $\lambda$ leads to the ground state energy of $-\partial_t^2 - t^{-1} \partial_t + d^2 / (4 t^2)  + \lambda t^d$, which a change of variables shows to be equal to $\lambda^{2/(d+2)}e_d$. Hence we arrive at \eqref{claim}.

For $d=1$, once readily checks that the ground state of $-\partial_t^2 - t^{-1} \partial_t + 1 / (4 t^2)  +  t$ equals $t^{-1/2} \mathrm {Ai}(t+a)$ with $a$ the largest real zero of the Airy function $\mathrm{Ai}$. In particular, $e_1 = -a$. For $d=2$ we find $e_2=4$ (the ground state energy of $-\Delta+|x|^2$ on ${\mathbb R}^4$),  
and the ground state of $-\partial_t^2 - t^{-1} \partial_t + 1 / t^2  +  t^2$ is given by $t e^{-t^2/2}$. 

One can also check that $R_d \to 1$ as $d\to \infty$. In fact, using \eqref{trial} as a trial state and optimizing over the choice of $\epsilon$, one finds 
$$
 R_d \geq \frac {\sqrt{ 1 +  \frac {2d^2}{1+d/2} } - 1}{\sqrt{ 1 +  \frac {2d^2}{1+d/2} } + 1}  \exp\left( - \frac {1+d/2}{d^2} \left(  \sqrt{ 1 +  \frac {2d^2}{1+d/2} } - 1\right) \right)= 1 - O(d^{-1/2})\,.
 $$

\bigskip
{\it Acknowledgments.} 
J.P.S. thanks the Institute of Science and Technology Austria for the hospitality and support during a visit where this work was done. J.P.S. was also partially supported by the VILLUM Centre of Excellence for the Mathematics of Quantum Theory (QMATH).

\end{document}